\documentclass[journal]{IEEEtran}
\usepackage{tabularx}
\usepackage{longtable}
\usepackage{enumitem}
\usepackage{amsmath,amssymb,amsthm}
\usepackage{mathtools}
\usepackage{graphicx}
\usepackage{cite}
\usepackage{url}
\usepackage{algorithm}
\usepackage{algpseudocode}
\usepackage{booktabs}
\usepackage{tabularx}
\usepackage{booktabs}
\usepackage{comment}

\newtheorem{theorem}{Theorem}

\begin{document}

\title{Learning over Forward-Invariant Policy Classes: Reinforcement Learning without Safety Concerns}

\author{ 
Chieh~Tsai,
Muhammad~Junayed~Hasan~Zahed,
Salim~Hariri,
and~Hossein~Rastgoftar%
}

\maketitle

\begin{abstract}
This paper proposes a safe reinforcement learning (RL) framework based on
forward-invariance-induced action-space design. The control problem is cast as
a Markov decision process, but instead of relying on runtime shielding or
penalty-based constraints, safety is embedded directly into the action
representation. Specifically, we construct a finite admissible action set in
which each discrete action corresponds to a stabilizing feedback law that
preserves forward invariance of a prescribed safe state set. Consequently, the
RL agent optimizes policies over a safe-by-construction policy class. We
validate the framework on a quadcopter hover-regulation problem under
disturbance. Simulation results show that the learned policy improves
closed-loop performance and switching efficiency, while all evaluated policies
remain safety-preserving. The proposed formulation decouples safety assurance
from performance optimization and provides a promising foundation for safe
learning in nonlinear systems.
\end{abstract}

\begin{IEEEkeywords}
Safe reinforcement learning,
Markov decision process,
Forward invariance,
Gain scheduling,
Nonlinear control,
Quadcopter hover regulation,
Safety-critical systems
\end{IEEEkeywords}

\section{Introduction}

Reinforcement learning (RL) has created new opportunities for adaptive control
of nonlinear dynamical systems, especially when operating conditions vary over
time and fixed feedback laws become conservative. However, in safety-critical
systems such as quadcopters, autonomous vehicles, and robotic platforms, a
central difficulty remains: the policy must improve performance without
violating hard safety requirements during either training or deployment.
Existing approaches often address this tension by augmenting the learning
architecture with penalties, shields, barrier constraints, or online safety
filters. Although effective in many settings, such mechanisms typically treat
safety as an external correction layer rather than as an intrinsic property of
the policy class.

This issue is particularly important for quadcopter control. Nonlinear flight
dynamics, underactuation, actuator limits, and attitude constraints make aggressive adaptive control challenging, especially when the vehicle transitions
between large transient errors and near-hover operation. High-performance model-based controllers can provide strong stability and tracking guarantees, but they usually rely on gains selected offline. As a result, the same feedback configuration must simultaneously handle fast transient rejection, hover regulation, and robustness to disturbances, which can lead to conservative performance across different operating regimes.

This paper develops a safe RL framework in which safety is embedded directly into the action representation. We model the closed-loop decision problem as a Markov decision process (MDP), but instead of allowing arbitrary control actions, we construct a finite admissible action set whose elements correspond to pre-certified stabilizing feedback laws. Each admissible action preserves forward invariance of a prescribed safe state set, so every policy defined over this action space is safety-preserving by construction. Consequently, learning is performed over a forward-invariant policy class rather than over an unconstrained control class that must later be corrected. In this paper, the proposed framework is instantiated on a quadcopter hover-regulation problem.
Organization of the paper is as follows.  Section~\ref{Problem Statement} presents the problem statement. Section~\ref{Case Study: Quadcopter Hover Regulation} introduces the quadcopter hover-regulation case study and the forward-invariance-preserving control design. Section~\ref{DQN-based Safe Learning} presents the DQN-based safe learning framework. Section~\ref{subsec:sim_setup} provides the simulation results. Finally, Section~\ref{Conclusion} concludes the paper.

\subsection{Related Work}

Quadcopter trajectory tracking has been studied extensively using nonlinear,
model-based control methods that exploit the structure of rigid-body flight
dynamics. Foundational works established the modeling and low-level control
framework for miniature quadrotors, including indoor platform development,
PID/LQ comparisons, and backstepping-based nonlinear control
\cite{Bouabdallah2004ICRA,Bouabdallah2004IROS,Castillo2004TCST,Madani2006IROS}.
Subsequent experimental and theory-driven studies clarified the effects of
underactuation, actuator dynamics, and full-pose stabilization in practical
quadrotor control architectures
\cite{Hoffmann2007AIAA,Bouabdallah2007IROS,Nagaty2013JIRS}. More recent
geometric and differential-flatness formulations provide strong tracking
guarantees while avoiding Euler-angle singularities and enabling aggressive,
dynamically feasible flight
\cite{Lee2010CDC,Mellinger2011ICRA,Faessler2018RAL}. Although these approaches
have achieved excellent performance, they are typically deployed with fixed
feedback gains selected offline, which may become conservative across
substantially different operating regimes such as large transients, aggressive
maneuvers, and near-hover flight.

Gain scheduling offers a natural mechanism for adapting controller
aggressiveness across changing operating conditions. In the quadrotor
literature, gain-scheduled PID designs have been investigated for
fault-tolerant control and path tracking under actuator degradation or regime
variation \cite{Milhim2010AIAA,Amoozgar2012IFAC}. These works demonstrate that
scheduled gains can improve closed-loop performance relative to a single
operating-point controller, especially when the vehicle experiences parameter
changes, faults, or markedly different maneuver envelopes. However, most
existing gain-scheduling methods rely on heuristic interpolation, fuzzy
supervision, fault logic, or manually tuned switching rules
\cite{Milhim2010AIAA,Amoozgar2012IFAC}. Consequently, they generally do not
provide a formal guarantee that the online gain updates preserve the safety
envelope of the nonlinear closed-loop system.

This issue is closely tied to forward invariance. In constrained nonlinear
control, forward invariance requires that trajectories initialized in an
admissible set remain in that set for all future time
\cite{Blanchini1999Automatica}. Barrier-certificate methods established an
important verification framework for safety with respect to unsafe sets
\cite{Prajna2007TAC}, while control barrier function formulations provided
constructive real-time tools for enforcing invariant safe sets online
\cite{Ames2017TAC}. Related safe-learning work has also emphasized certified
regions of attraction, safe set expansion, and model-based stability
guarantees for uncertain nonlinear systems
\cite{Berkenkamp2016CDC,Berkenkamp2017NeurIPS}. For quadcopters, this
viewpoint is particularly important because position, attitude, angular-rate,
and thrust constraints are not merely soft performance targets; violating them
can compromise feasibility and invalidate controller assumptions. These
limitation motivate the use of admissible gain sets whose elements are certified
to preserve forward invariance of the closed-loop quadcopter dynamics.

From the reinforcement-learning viewpoint, the sequential decision-making
problem is naturally modeled as a Markov decision process (MDP), which
provides a standard framework for state evolution, action selection, and
long-horizon reward optimization
\cite{Bellman1957,Puterman1994,SuttonBarto2018}. This abstraction is
especially useful in control because it separates the system dynamics from the
policy class used for decision making. In safe RL, the MDP viewpoint is often
extended through constrained or safety-aware formulations, where return
maximization is supplemented by additional safety requirements
\cite{Achiam2017ICML,Chow2018NeurIPS}. Nevertheless, in many such approaches,
safety is enforced through runtime projection, auxiliary penalties, Lyapunov
constraints, or barrier-function filtering, rather than by restricting the
action space itself to a set of safety-certified controllers.

Learning-based flight control has emerged as an attractive alternative for
reducing hand tuning and compensating for modeling errors. Reinforcement
learning has been applied to low-level UAV attitude control and has shown
promising performance in high-fidelity simulation \cite{Koch2019TCPS}. More
broadly, deep Q-learning demonstrated the effectiveness of value-based RL over
discrete action sets \cite{Mnih2015Nature}. Safe RL has further introduced
constrained policy optimization, Lyapunov-based updates, model-based
stability-certified learning, and barrier-function-based safety mechanisms
\cite{Achiam2017ICML,Berkenkamp2017NeurIPS,Chow2018NeurIPS,Cheng2019AAAI}.
However, most RL-based quadrotor controllers still learn control actions
directly, or else adapt controller parameters without embedding formal
invariance guarantees into the gain-selection mechanism.

In contrast, the present work formulates safe gain scheduling as an MDP whose action space is already safety certified. Rather than learning thrust
and torque commands directly, the proposed DQN policy selects gain vectors
from a finite library of pre-certified stabilizing controllers.
Because each admissible action preserves forward invariance of the prescribed
safe set, every policy explored during training
and deployment remains safety-preserving by construction. This distinguishes
the proposed method from prior safe-RL approaches that rely on runtime
correction or auxiliary safety filters, and it yields a more interpretable,
verification-friendly framework for adaptive quadcopter control.

\subsection{Contributions}

We consider the problem of safe learning for dynamical systems with uncertain models through control-theoretic invariance. By modeling the system evolution as a Markov decision process (MDP), safety is embedded directly into the decision-making structure rather than enforced through auxiliary mechanisms. In contrast to existing approaches, the proposed framework establishes safety at the level of the action space, yielding the following contributions:

\noindent \textbf{Contribution 1: Forward-Invariant Action Design for Safe Learning.}
We construct an invariance-induced action space in which each discrete action corresponds to a stabilizing feedback law that guarantees forward invariance of a prescribed admissible set. This yields a finite family of controllers under which the closed-loop system remains safe for all time, independently of the selected action sequence. Consequently, safety is decoupled from the learning process and holds uniformly over all admissible policies, including those encountered during exploration.

\noindent \textbf{Contribution 2: Safe Policy Optimization over Invariance-Certified Action Spaces.}
We show that, under this construction, reinforcement learning can be performed over a safety-certified policy class without the need for runtime constraint handling, action projection, or shielding mechanisms. In contrast to existing safe RL approaches that rely on online optimization or corrective filtering, the proposed framework ensures that every policy generated by the learning algorithm is safe by design. This leads to a simplified learning architecture with reduced computational overhead and improved scalability.

\noindent \textbf{Contribution 3: Safety-Certified Gain Scheduling for Quadcopter Control.}
We validate the proposed framework on a quadcopter hover control problem, where the learning agent performs gain scheduling over a predefined family of stabilizing controllers. The case study demonstrates that adaptive performance optimization can be achieved while preserving strict safety guarantees throughout both training and deployment.

\subsection{Outline}
This paper is organized as follows. Section \ref{Problem Statement} formulates safe learning as an MDP with invariance-certified actions that guarantee safety and reduce learning to optimal control over a safe policy class. Section \ref{Case Study: Quadcopter Hover Regulation} presents a quadcopter hover case study. Section \ref{DQN-based Safe Learning} introduces the DQN-based safe learning framework. Simulation results validating the proposed approach are provided in Section \ref{subsec:sim_setup}, followed by concluding remarks in Section \ref{Conclusion}.
\section{Problem Statement}\label{Problem Statement}

We consider safe learning for a discrete-time dynamical system in which safety is guaranteed by design, independently of the learning process. To this end, we formulate a Markov decision process (MDP)
\[
\mathcal{M}=(\mathcal{X},\mathcal{A},F,r,\gamma),
\]
where $\mathcal{X}\subseteq\mathbb{R}^n$ is a continuous state space, $\mathcal{A}$ is a finite action set, $r:\mathcal{X}\times\mathcal{A}\to\mathbb{R}$ is the stage reward, and $\gamma\in(0,1)$ is the discount factor. The system evolves according to
\[
x_{k+1}=F(x_k,a_k),
\]
where $F:\mathcal{X}\times\mathcal{A}\to\mathcal{X}$ is the closed-loop transition map.

Rather than enforcing safety through online constraint handling or action filtering, we embed safety directly into the admissible action set.

\noindent\textbf{Problem 1 (Invariance-Induced Action Space Design).}
Construct a feedback parameterization and a finite action set $\mathcal{A}$ such that
\[
F(x,a)\in\mathcal{X}, \qquad \forall (x,a)\in\mathcal{X}\times\mathcal{A}.
\]
Then $\mathcal{X}$ is controlled invariant under all admissible actions, and
\[
x_0\in\mathcal{X}\quad\Longrightarrow\quad x_k\in\mathcal{X},\qquad \forall k\ge 0.
\]

\noindent\textbf{Problem 2 (Learning over an Invariant Policy Class).}
Given the action set $\mathcal{A}$ from Problem 1, determine a policy
\[
\pi:\mathcal{X}\to\mathcal{A}
\]
that maximizes
\[
J_\pi(x_0)
=
\mathbb{E}\!\left[\sum_{k=0}^{\infty}\gamma^k r\bigl(x_k,\pi(x_k)\bigr)\right].
\]
Since every action in $\mathcal{A}$ preserves invariance, learning reduces to optimal decision-making over a safety-certified policy class.

As a case study, we consider gain-scheduled quadcopter hover regulation, where each action corresponds to a stabilizing feedback gain selected from an invariance-certified family.

\section{Case Study: Quadcopter Hover Achievement}
\label{Case Study: Quadcopter Hover Regulation}

We instantiate the proposed framework on quadcopter hover regulation. The objective is to drive the vehicle to a desired hover equilibrium while preserving safety. We proceed in two steps: first, we construct an invariance-certified action set; second, we derive a discrete-time transition model for learning.

\subsection{Forward-Invariance-Preserving Control Design}

We consider the control-affine quadcopter dynamics
\begin{equation}
\dot{\mathbf{x}}=\mathbf{f}_0(\mathbf{x})+\mathbf{G}_0\mathbf{u},
\end{equation}
where
\[
\mathbf{x}
=
\begin{bmatrix}
\mathbf{r}^\top &
\mathbf{v}^\top &
\boldsymbol{\eta}^\top &
\dot{\boldsymbol{\eta}}^\top &
T &
\dot T
\end{bmatrix}^\top
\in\mathbb{R}^{14},
\]
\[
\mathbf{u}
=
\begin{bmatrix}
u_T & u_\phi & u_\theta & u_\psi
\end{bmatrix}^\top
\in\mathbb{R}^{4}.
\]
Here, $\mathbf{r}\in\mathbb{R}^3$ and $\mathbf{v}\in\mathbb{R}^3$ denote position and velocity, $\boldsymbol{\eta}=[\phi,\theta,\psi]^\top$ is the Euler-angle vector, and $T$ and $\dot T$ are the thrust deviation and thrust rate, respectively.

The drift vector field and input matrix are
\begin{equation}
\mathbf{f}_0(\mathbf{x})=
\begin{bmatrix}
\mathbf{v} \\[3pt]
-\,g\hat{\mathbf e}_3+\dfrac{mg+T}{m}\mathbf{R}(\boldsymbol{\eta})\hat{\mathbf e}_3\\[6pt]
\dot{\boldsymbol{\eta}} \\[3pt]
\mathbf{0}_{3} \\[3pt]
\dot T \\[3pt]
0
\end{bmatrix},
~
\mathbf{G}_0=
\begin{bmatrix}
\mathbf{0}_{9\times 1} & \mathbf{0}_{9\times 3}\\
\mathbf{0}_{3\times 1} & \mathbf{I}_3\\
0 & \mathbf{0}_{1\times 3}\\
1 & \mathbf{0}_{1\times 3}
\end{bmatrix}.
\end{equation}
Here, $g>0$ is the gravitational constant, $m>0$ is the vehicle mass, $\hat{\mathbf e}_3=[0~0~1]^\top$, and $\mathbf{R}(\boldsymbol{\eta})\in SO(3)$ is the rotation matrix from body to inertial coordinates.

Let $\mathbf r_I^*\in\mathbb R^3$ denote the desired hover position, and let $\mathbf r_d(t)$ be a smooth reference trajectory from $\mathbf r(0)$ to $\mathbf r_I^*$ with bounded derivatives up to fourth order. Define the tracking errors
\[
\mathbf e_r=\mathbf r-\mathbf r_d,~
\mathbf e_v=\mathbf v-\dot{\mathbf r}_d,~
\mathbf e_a=\mathbf a-\ddot{\mathbf r}_d,~
\]
and the external tracking-error state
\begin{equation}
\mathbf z=
\begin{bmatrix}
\mathbf e_r^\top &
\mathbf e_v^\top &
\mathbf e_a^\top &
\mathbf e_j^\top &
\psi &
\dot\psi
\end{bmatrix}^\top
\in\mathbb R^{14}.
\end{equation}

Using differential flatness and dynamic inversion, the external error dynamics can be written as
\begin{equation}
\dot{\mathbf z}
=
\mathbf A_{\mathrm{EXT}}\mathbf z
+
\mathbf B_{\mathrm{EXT}}\mathbf s
-
\mathbf B_{\mathrm{EXT}}\mathbf r_d^{(4)}(t),
\end{equation}
where $\mathbf s\in\mathbb R^4$ is the virtual input,
\[
\mathbf A_{\mathrm{EXT}}=
\begin{bmatrix}
\mathbf{0}_{3} & \mathbf{I}_{3} & \mathbf{0}_{3} & \mathbf{0}_{3} & \mathbf{0} & \mathbf{0}\\
\mathbf{0}_{3} & \mathbf{0}_{3} & \mathbf{I}_{3} & \mathbf{0}_{3} & \mathbf{0} & \mathbf{0}\\
\mathbf{0}_{3} & \mathbf{0}_{3} & \mathbf{0}_{3} & \mathbf{I}_{3} & \mathbf{0} & \mathbf{0}\\
\mathbf{0}_{3} & \mathbf{0}_{3} & \mathbf{0}_{3} & \mathbf{0}_{3} & \mathbf{0} & \mathbf{0}\\
\mathbf{0} & \mathbf{0} & \mathbf{0} & \mathbf{0} & 0 & 1\\
\mathbf{0} & \mathbf{0} & \mathbf{0} & \mathbf{0} & 0 & 0
\end{bmatrix},
~
\mathbf B_{\mathrm{EXT}}=
\begin{bmatrix}
\mathbf{0} & 0\\
\mathbf{0} & 0\\
\mathbf{0} & 0\\
\mathbf{I}_3 & 0\\
\mathbf{0} & 0\\
\mathbf{0} & 1
\end{bmatrix}.
\]
The first three inputs of $\mathbf s$ act on the translational snap dynamics, and the fourth input acts on the yaw acceleration dynamics.

\noindent\textbf{Stability and Forward Invariance:}
We consider a family of feedback laws parameterized by
\[
\mathbf k\in\mathcal K\subset\mathbb R^{14},
\qquad
k_i\in[k_{i,\min},k_{i,\max}],\quad i=1,\dots,14,
\]
and choose
\begin{equation}
\mathbf s=-\mathbf K\mathbf z,
\end{equation}
where $\mathbf K\in\mathbb R^{4\times 14}$ is constructed from $\mathbf k$. The resulting closed-loop dynamics are
\begin{equation}
\dot{\mathbf z}
=
\mathbf A_{\mathrm{cl}}(\mathbf k)\mathbf z
-
\mathbf B_{\mathrm{EXT}}\mathbf r_d^{(4)}(t),
\qquad
\mathbf A_{\mathrm{cl}}(\mathbf k)
\triangleq
\mathbf A_{\mathrm{EXT}}-\mathbf B_{\mathrm{EXT}}\mathbf K.
\end{equation}

\begin{theorem}
Consider the closed-loop system
\begin{equation}
\dot{\mathbf z}
=
\mathbf A_{\mathrm{cl}}(\mathbf k)\mathbf z
-
\mathbf B_{\mathrm{EXT}}\mathbf r_d^{(4)}(t),
\end{equation}
where
\[
\mathbf A_{\mathrm{cl}}(\mathbf k)
=
\mathbf A_{\mathrm{EXT}}-\mathbf B_{\mathrm{EXT}}\mathbf K.
\]
Assume that $\mathbf A_{\mathrm{cl}}(\mathbf k)$ is Hurwitz for all $\mathbf k\in\mathcal K$, and that
\[
\|\mathbf r_d^{(4)}(t)\|\le \bar r_4,
\qquad \forall t\ge 0,
\]
for some constant $\bar r_4>0$. Then, for each $\mathbf k\in\mathcal K$, the closed-loop error dynamics are input-to-state stable with respect to the input $\mathbf r_d^{(4)}(t)$. Moreover, for each $\mathbf k\in\mathcal K$, there exists a symmetric positive definite matrix $\mathbf P(\mathbf k)$ such that the quadratic Lyapunov function
\begin{equation}
V(\mathbf z)=\mathbf z^\top \mathbf P(\mathbf k)\mathbf z
\end{equation}
satisfies
\begin{equation}
\dot V(\mathbf z)
\le
-\alpha \|\mathbf z\|^2+\beta \bar r_4^2,
\end{equation}
for some constants $\alpha,\beta>0$.
\end{theorem}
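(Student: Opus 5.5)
The plan is the standard quadratic-Lyapunov argument for a Hurwitz linear system driven by a bounded exogenous input, applied separately for each fixed $\mathbf k\in\mathcal K$. Throughout, write $\mathbf w(t)\triangleq\mathbf r_d^{(4)}(t)$ and abbreviate $\mathbf A\triangleq\mathbf A_{\mathrm{cl}}(\mathbf k)$ and $\mathbf B\triangleq\mathbf B_{\mathrm{EXT}}$.

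\textbf{Step 1 (Lyapunov matrix).} Since $\mathbf A$ is Hurwitz, the Lyapunov equation
\[
\mathbf A^\top\mathbf P+\mathbf P\mathbf A=-\mathbf Q, \qquad \mathbf Q=\mathbf I_{14},
\]
has a unique symmetric positive definite solution
\[
\mathbf P(\mathbf k)=\int_0^\infty e^{\mathbf A^\top t}e^{\mathbf A t}\,dt .
\]
This gives the matrix claimed in the statement, with
\[
\lambda_{\min}(\mathbf P)\|\mathbf z\|^2\le V(\mathbf z)\le\lambda_{\max}(\mathbf P)\|\mathbf z\|^2 .
\]

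\textbf{Step 2 (derivative bound).} Along trajectories,
\[
\dot V=-\|\mathbf z\|^2-2\mathbf z^\top\mathbf P\mathbf B\mathbf w .
\]
Apply Young's inequality to the cross term, $2|\mathbf z^\top\mathbf P\mathbf B\mathbf w|\le \epsilon\|\mathbf z\|^2+\epsilon^{-1}\|\mathbf P\mathbf B\|^2\|\mathbf w\|^2$, with $\epsilon=\tfrac12$. Using $\|\mathbf w\|\le\bar r_4$, this gives
\[
\dot V\le -\tfrac12\|\mathbf z\|^2+2\|\mathbf P(\mathbf k)\mathbf B_{\mathrm{EXT}}\|^2\bar r_4^2 .
\]
So the claim holds with $\alpha=\tfrac12$ and $\beta=2\|\mathbf P(\mathbf k)\mathbf B_{\mathrm{EXT}}\|^2>0$. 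Here $\mathbf P\mathbf B\neq 0$ because $\mathbf P$ is nonsingular and $\mathbf B\neq0$. If constants uniform in $\mathbf k$ are wanted, the same bound works with $\beta=2\sup_{\mathbf k\in\mathcal K}\|\mathbf P(\mathbf k)\mathbf B\|^2$; this supremum is finite because $\mathcal K$ is a compact box, $\mathbf P(\mathbf k)$ depends continuously on $\mathbf k$ on the Hurwitz set, and a continuous function on a compact set is bounded. The statement, however, only asks for constants for each $\mathbf k$, so this uniformity step is optional.

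\textbf{Step 3 (ISS).} Keep $\|\mathbf w(t)\|$ in place of $\bar r_4$ in Step 2, so that $\dot V\le-\tfrac12\|\mathbf z\|^2+\beta\|\mathbf w(t)\|^2$. Then
\[
\dot V\le -\frac{1}{2\lambda_{\max}(\mathbf P)}V+\beta\|\mathbf w\|_\infty^2 .
\]
The comparison lemma gives
\[
V(t)\le e^{-ct}V(0)+\frac{\beta}{c}\|\mathbf w\|_\infty^2, \qquad c=\frac{1}{2\lambda_{\max}(\mathbf P)} .
\]
Converting back with the eigenvalue bounds yields
\[
\|\mathbf z(t)\|\le\sqrt{\frac{\lambda_{\max}}{\lambda_{\min}}}\,e^{-ct/2}\|\mathbf z(0)\|+\sqrt{\frac{\beta}{c\,\lambda_{\min}}}\,\|\mathbf w\|_\infty ,
\]
which is a $\mathcal{KL}$ term plus a class-$\mathcal K$ gain term, i.e.\ input-to-state stability. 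Alternatively, ISS follows directly from the variation-of-constants formula together with $\|e^{\mathbf A t}\|\le Me^{-\lambda t}$.

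There is no real obstacle. The only point needing care is the bookkeeping: the constants must be stated as depending on $\mathbf k$ (or made uniform via compactness of $\mathcal K$), and the bound $\|\mathbf w\|\le\bar r_4$ should be used only in the final inequality, not in the ISS estimate.
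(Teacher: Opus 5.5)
Your proposal is correct and follows the paper's proof. Both use the Lyapunov equation with $\mathbf Q=\mathbf I$, the expansion $\dot V=-\|\mathbf z\|^2-2\mathbf z^\top\mathbf P\mathbf B_{\mathrm{EXT}}\mathbf r_d^{(4)}$, and Young's inequality; your $\epsilon=\tfrac12$ is the special case of the paper's $\varepsilon\in(0,1)$ with $\alpha=1-\varepsilon$, $\beta=\|\mathbf P(\mathbf k)\mathbf B_{\mathrm{EXT}}\|^2/\varepsilon$. The only difference is the ISS step: the paper just invokes the dissipation-inequality (ISS-Lyapunov) characterization, whereas you derive the explicit $\mathcal{KL}$-plus-gain estimate via the comparison lemma, a more self-contained route to the same conclusion.
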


\begin{proof}
Fix any $\mathbf k\in\mathcal K$. Since $\mathbf A_{\mathrm{cl}}(\mathbf k)$ is Hurwitz, for any symmetric positive definite matrix $\mathbf Q$ there exists a unique symmetric positive definite matrix $\mathbf P(\mathbf k)$ satisfying the Lyapunov equation
\begin{equation}
\mathbf A_{\mathrm{cl}}(\mathbf k)^\top \mathbf P(\mathbf k)
+
\mathbf P(\mathbf k)\mathbf A_{\mathrm{cl}}(\mathbf k)
=
-\mathbf Q.
\end{equation}
By choosing $\mathbf Q=\mathbf I$, the function
\[
V(\mathbf z)=\mathbf z^\top \mathbf P(\mathbf k)\mathbf z
\]
is positive definite and radially unbounded. Differentiating $V$ along the trajectories of the closed-loop system gives
\begin{align}
\dot V(\mathbf z)
&=
\mathbf z^\top
\bigl(
\mathbf A_{\mathrm{cl}}(\mathbf k)^\top \mathbf P(\mathbf k)
+
\mathbf P(\mathbf k)\mathbf A_{\mathrm{cl}}(\mathbf k)
\bigr)\mathbf z\\
-&
2\mathbf z^\top \mathbf P(\mathbf k)\mathbf B_{\mathrm{EXT}}\mathbf r_d^{(4)}(t) =
-\mathbf z^\top \mathbf Q \mathbf z
-
2\mathbf z^\top \mathbf P(\mathbf k)\mathbf B_{\mathrm{EXT}}\mathbf r_d^{(4)}.
\end{align}
By provoking the Cauchy--Schwarz inequality, we obtain
\begin{align}
\dot V(\mathbf z)
&\le
-\|\mathbf z\|^2
+
2\|\mathbf P(\mathbf k)\mathbf B_{\mathrm{EXT}}\|\,\|\mathbf z\|\,\|\mathbf r_d^{(4)}(t)\| \\
&\le
-\|\mathbf z\|^2
+
2\|\mathbf P(\mathbf k)\mathbf B_{\mathrm{EXT}}\|\,\bar r_4\,\|\mathbf z\|.
\end{align}
By applying Young's inequality,
\[
2ab\le \varepsilon a^2+\frac{1}{\varepsilon}b^2,
\qquad \forall\,\varepsilon>0,
\]
with
\[
a=\|\mathbf z\|,
\qquad
b=\|\mathbf P(\mathbf k)\mathbf B_{\mathrm{EXT}}\|\bar r_4,
\]
we obtain
\begin{equation}
\dot V(\mathbf z)
\le
-(1-\varepsilon)\|\mathbf z\|^2
+
\frac{\|\mathbf P(\mathbf k)\mathbf B_{\mathrm{EXT}}\|^2}{\varepsilon}\bar r_4^2.
\end{equation}
Finally, by choosing any $\varepsilon\in(0,1)$ and defining
\[
\alpha=1-\varepsilon,
\qquad
\beta=\frac{\|\mathbf P(\mathbf k)\mathbf B_{\mathrm{EXT}}\|^2}{\varepsilon},
\]
we obtain
\begin{equation}
\dot V(\mathbf z)
\le
-\alpha \|\mathbf z\|^2
+
\beta \|\mathbf r_d^{(4)}(t)\|^2.
\end{equation}
This inequality implies that $\dot V(\mathbf z) < 0$ whenever
\[
\|\mathbf z\|^2 > \frac{\beta}{\alpha}\|\mathbf r_d^{(4)}(t)\|^2.
\]
Hence, the trajectories enter and remain in the compact set
\[
\Omega =
\left\{
\mathbf z :
\|\mathbf z\|^2 \le \frac{\beta}{\alpha}\bar r_4^2
\right\},
\]
which establishes uniform ultimate boundedness.

Moreover, since $V$ is positive definite and radially unbounded and satisfies the dissipation inequality
\[
\dot V \le -\alpha(\|\mathbf z\|) + \gamma(\|\mathbf r_d^{(4)}(t)\|),
\]
the closed-loop system is input-to-state stable with respect to $\mathbf r_d^{(4)}(t)$.
\end{proof}

For each admissible gain choice, define
\[
\Omega_\rho
=
\{\mathbf z\in\mathbb R^{14}:V(\mathbf z)\le \rho\}.
\]
If $\rho>\beta \bar r_4^2/\alpha$, then $\Omega_\rho$ is forward invariant for the closed-loop error dynamics. Consequently, the state-space safe set
\[
\mathcal X
=
\{x: \mathbf z(x)\in\Omega_\rho\}
\]
is forward invariant under every admissible action $a\in\mathcal A$ induced by the gain family $\mathcal K$.


\noindent\textbf{Nonlinear closed-loop dynamics:}
Through dynamic inversion, the external input $\mathbf{s}$ is mapped to the physical control input $\mathbf{u}$ via
\[
\mathbf{s} = \mathbf{M}(\mathbf{x})\mathbf{u} + \mathbf{n}(\mathbf{x}),
\]
so that
\[
\mathbf{u} = \mathbf{M}^{-1}(\mathbf{x})\bigl(\mathbf{s} - \mathbf{n}(\mathbf{x})\bigr).
\]
Substituting this into the control-affine dynamics yields
\begin{equation}\label{dyn8}
\dot{\mathbf{x}} = \mathbf{f}(\mathbf{x}) + \mathbf{G}(\mathbf{x})\mathbf{k},
\end{equation}
which defines the nonlinear closed-loop system parameterized by $\mathbf{k}$.
Define the admissible set
\[
\mathcal{X}=\{x:\|\mathbf{z}(x)\|\le \delta\}.
\]
If $\mathcal{K}$ is chosen such that the error dynamics are uniformly asymptotically stable for all $\mathbf{k}\in\mathcal{K}$, then $\mathcal{X}$ is forward invariant under all admissible actions:
\[
x\in\mathcal{X}\quad\Longrightarrow\quad F(x,a)\in\mathcal{X},\qquad \forall a\in\mathcal{A},
\]
where
\[
\mathcal{A}=\{\mathbf{k}^{(1)},\dots,\mathbf{k}^{(N)}\}\subset\mathcal{K}.
\]

\subsection{Control-Oriented Discrete-Time Dynamics}

Under a zero-order hold discretization, the quadcopter dynamics \eqref{dyn8} are expressed as
\begin{equation}
x_{k+1} = F(x_k, a_k),
\end{equation}
where $F$, computed numerically (e.g., via a Runge--Kutta scheme), defines the MDP transition map, and $a_k \in \mathcal{A}$ denotes the control gain vector applied at discrete time $k$.

\section{DQN-based Safe Learning}
\label{DQN-based Safe Learning}

Given $\mathcal{A}$, policy optimization is formulated as a discrete-action reinforcement learning problem. For this problem, $x_k\in \mathbb{R}^{14}$
is considered as the state, and the reward is given by
\begin{equation}
\begin{split}
r(x_k,a_k)
=&-
\Big(
w_r \|\mathbf{e}_r\|^2
+ w_v \|\mathbf{e}_v\|^2
+ w_\eta \|\boldsymbol{\eta}\|^2
+ w_\omega \|\boldsymbol{\omega}\|^2
\Big)\\
-&\,w_u \|\mathbf{u}\|^2
- w_s \mathbf{1}\{a_k\neq a_{k-1}\},
\end{split}
\end{equation}
where \(\mathbf{e}_r=\mathbf{r}-\mathbf{r}_d\) and \(\mathbf{e}_v=\mathbf{v}-\dot{\mathbf{r}}_d\) denote the position- and velocity-tracking errors, \(\boldsymbol{\eta}=[\phi,\theta,\psi]^\top\) is the attitude vector, and \(\boldsymbol{\omega}\) is the body angular velocity. The term \(w_r\|\mathbf{e}_r\|^2\) promotes accurate convergence to the desired hover position, while \(w_v\|\mathbf{e}_v\|^2\) suppresses residual translational motion and improves transient damping. The attitude penalty \(w_\eta\|\boldsymbol{\eta}\|^2\) discourages excessive roll, pitch, and yaw excursions, and the angular-rate penalty \(w_\omega\|\boldsymbol{\omega}\|^2\) reduces aggressive rotational motion and oscillatory behavior. The control-effort term \(w_u\|\mathbf{u}\|^2\) regularizes actuator usage and prevents unnecessarily large control inputs. Finally, the switching penalty \(w_s \mathbf{1}\{a_k\neq a_{k-1}\}\) discourages frequent changes between gain selections, thereby promoting smoother controller switching and reducing chattering in the learned policy. Large negative terminal penalties are assigned when admissibility conditions are violated, so that unsafe or diverging trajectories become strongly suboptimal.

The action-value function is approximated by a neural network
\[
Q(x,a;\theta),
\]
where \(\theta\) denotes the trainable parameters of the Q-network. The network is trained using standard DQN with experience replay and a target network. At decision time \(k\), the action is selected according to an \(\epsilon\)-greedy policy:
\[
a_k=
\begin{cases}
\text{a random action in } \mathcal{A}, & \text{with probability } \epsilon,\\[2mm]
\displaystyle \arg\max_{a\in\mathcal{A}} Q(x_k,a;\theta), & \text{with probability } 1-\epsilon.
\end{cases}
\]
The loss is
\begin{subequations}
    \begin{equation}
\mathcal{L}(\theta)=
\mathbb{E}\!\left[
\bigl(Q(x_k,a_k;\theta)-y_k\bigr)^2
\right],
\end{equation}
\begin{equation}
y_k=r_k+\gamma\max_{a'}Q(x_{k+1},a';\theta^-).
\end{equation}

\end{subequations}

Since all actions preserve invariance, every policy encountered during training is safety-preserving, and exploration requires no constraint handling.

\section{Simulation Results}
\label{subsec:sim_setup}

We evaluate the proposed DQN-based gain-scheduling controller on the quadcopter
case study using a high-fidelity nonlinear simulation with Euler ZYX attitude
representation and thrust/torque actuation. The physical state is
$
\mathbf{x} =
[\mathbf{r}^\top,\mathbf{v}^\top,\boldsymbol{\eta}^\top,\boldsymbol{\omega}^\top,T_{\mathrm{dev}},\dot T]^\top
\in \mathbb{R}^{14},
$
where $\mathbf{r},\mathbf{v}\in\mathbb{R}^3$ denote inertial position and velocity,
$\boldsymbol{\eta}=[\phi,\theta,\psi]^\top$ denotes the Euler-angle attitude,
$\boldsymbol{\omega}\in\mathbb{R}^3$ denotes the body angular velocity,
and $T_{\mathrm{dev}}$ and $\dot T$ capture thrust-deviation dynamics.
The vehicle parameters are fixed to mass $m=1.5~\mathrm{kg}$,
gravitational acceleration $g=9.81~\mathrm{m/s^2}$, and inertia matrix
$\mathbf{I}=\mathrm{diag}(0.02,\,0.02,\,0.04)~\mathrm{kg\,m^2}$.
The simulation is integrated with time step $\Delta t = 0.01~\mathrm{s}$,
and each episode lasts $10~\mathrm{s}$.

\begin{figure}[!h]
  \centering
  \includegraphics[width=0.95\linewidth]{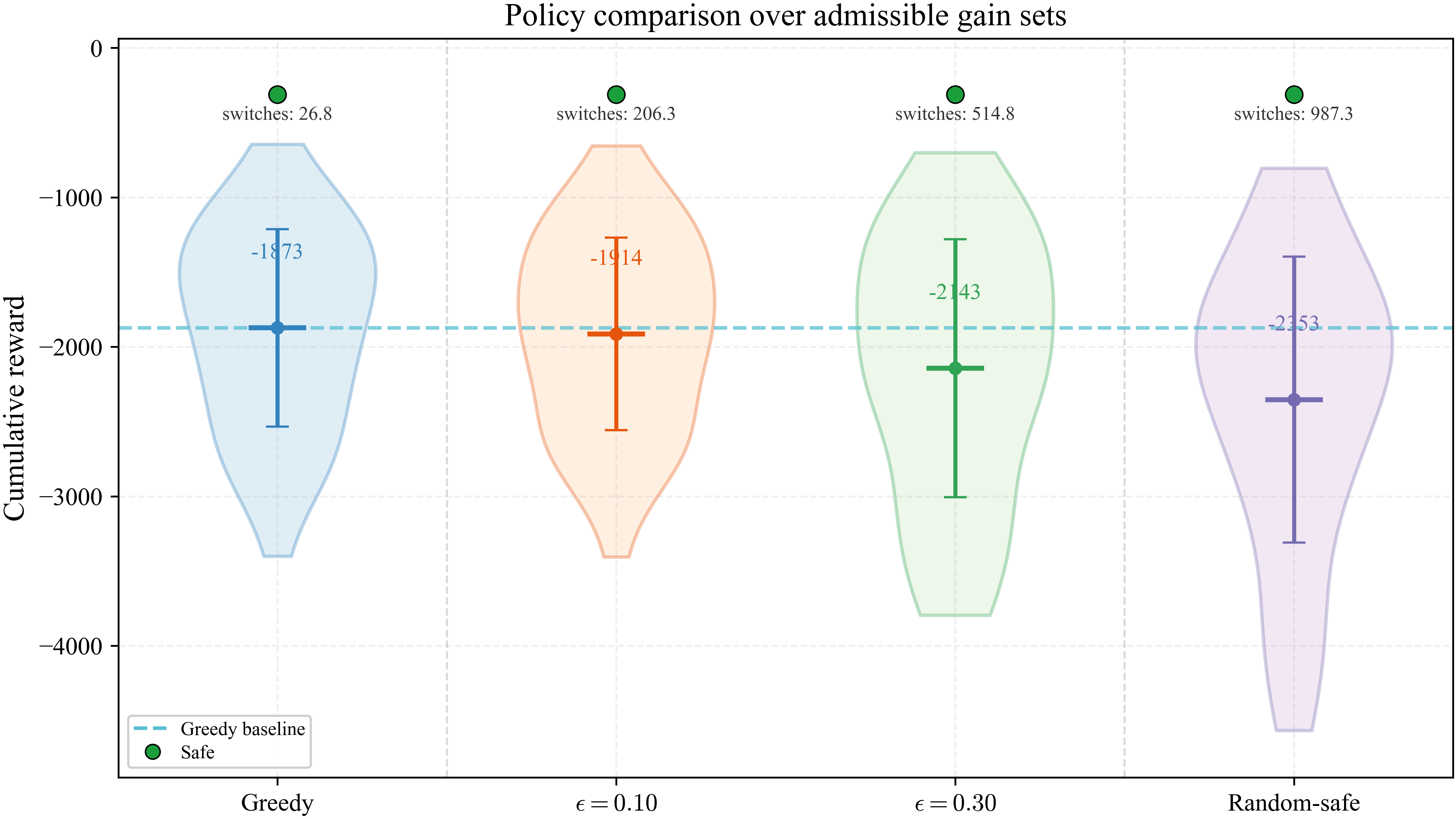}
  \caption{Policy comparison over the admissible gain set. The results show that safety is preserved across all evaluated policies, whereas performance and switching efficiency depend strongly on the policy. Violin plots show the distribution of cumulative reward across 40 evaluation rollouts for each policy. Colored markers and vertical bars denote the sample mean and one standard deviation, respectively, while the dashed horizontal line indicates the greedy-policy mean. The annotated values report the average number of action switches.}
  \label{fig:policy_compare}
\end{figure}

The reference trajectory is generated using a smooth quintic time-scaling over
$t\in[0,T_f]$ with $T_f=5~\mathrm{s}$. For $t>T_f$, the desired position is held
at $\mathbf{r}_d(T_f)$, while the desired velocity, acceleration, jerk, and snap
are set to zero so that the vehicle transitions to a hover condition after the
maneuver.
\begin{figure}[!h]
  \centering
  \includegraphics[width=0.85\linewidth]{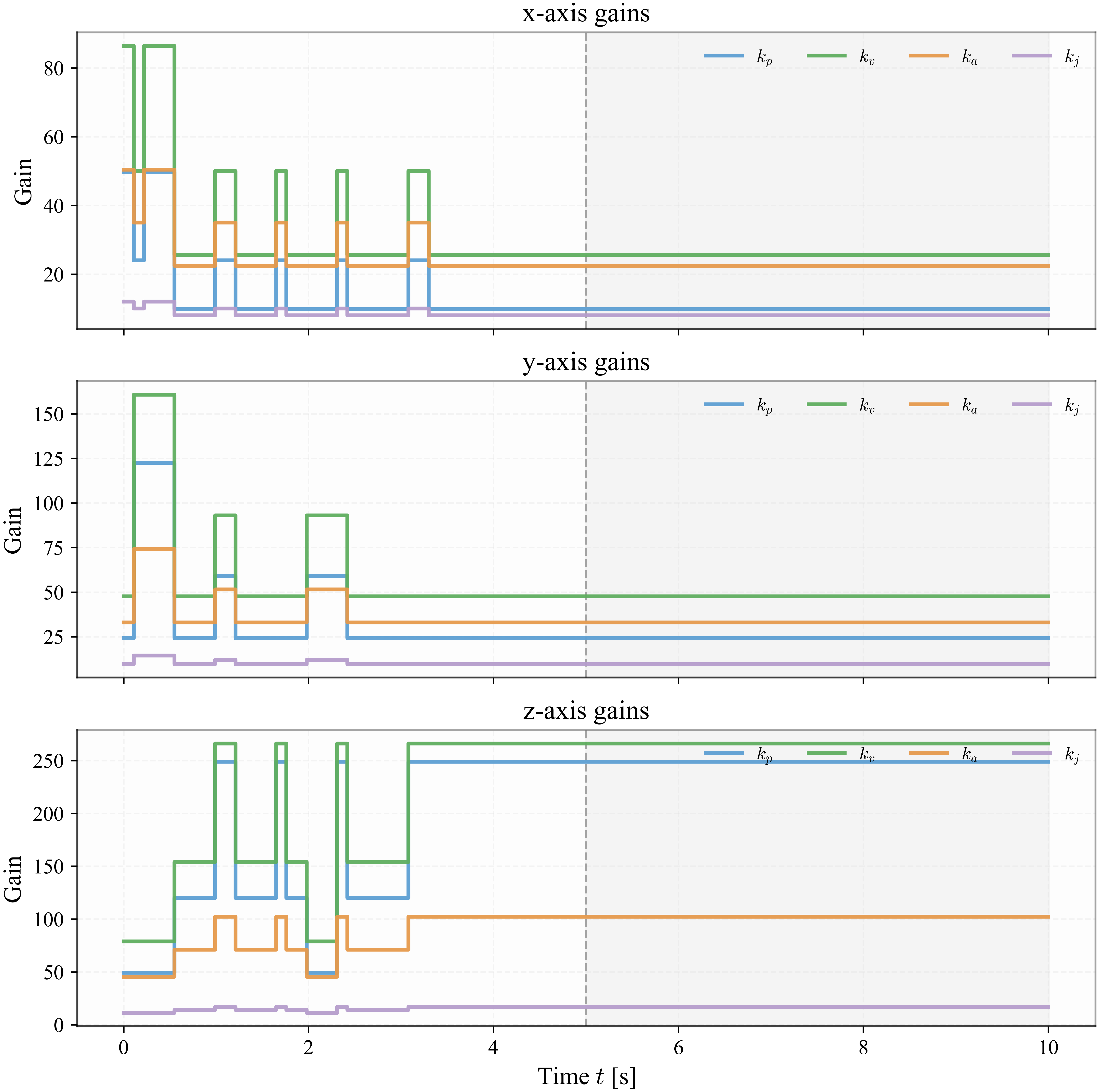}
  \caption{Representative rollout of the translational gains selected by the trained DQN along the $x$-, $y$-, and $z$-axes. The learned scheduler switches more actively during the initial transient and then transitions to a nearly constant gain regime as the quadcopter approaches the terminal hover condition. The $z$-axis generally exhibits larger gain magnitudes than the $x$- and $y$-axes, consistent with the stronger vertical regulation required for altitude control.}
  \label{fig:dqn_gains}
\end{figure}

\begin{table}[!h]
\centering
\caption{Closed-loop attitude safety statistics over 40 evaluation rollouts. All policies completed all 40 rollouts without unsafe termination. Lower values indicate smaller pitch excursions.}
\label{tab:policy_summary}
\setlength{\tabcolsep}{5pt}
\renewcommand{\arraystretch}{1.08}
\begin{tabular}{lcc}
\toprule
Policy & Mean peak $|\theta|$ (rad) & Worst-case $|\theta|$ (rad) \\
\midrule
Greedy           & \textbf{0.170} & \textbf{0.314} \\
$\epsilon=0.10$  & 0.173          & \textbf{0.314} \\
$\epsilon=0.30$  & 0.171          & 0.320 \\
Random-safe      & 0.174          & 0.328 \\
\bottomrule
\end{tabular}
\end{table}

The DQN observation is formed by concatenating the 14-dimensional physical state
with the scalar phase variable $\min(t/T_f,1)$, resulting in a 15-dimensional
input. At each decision step, the agent selects one discrete action from a finite
table of pre-computed stabilizing gain vectors
$
\mathbf{k}=[k_1,\dots,k_{14}]^\top.
$
In the current implementation, translational gains are selected separately along
the $x$-, $y$-, and $z$-axes, whereas the yaw gains are treated independently to
reflect the distinct second-order yaw dynamics. This parameterization enlarges the
admissible discrete action set and enables direction-dependent modulation of
feedback aggressiveness during the maneuver. To avoid excessive switching, a
dwell-time constraint is imposed so that each selected action is held for a fixed
number of time steps before another switch is allowed.

\begin{figure}[!h]
  \centering
  \includegraphics[width=0.85\linewidth]{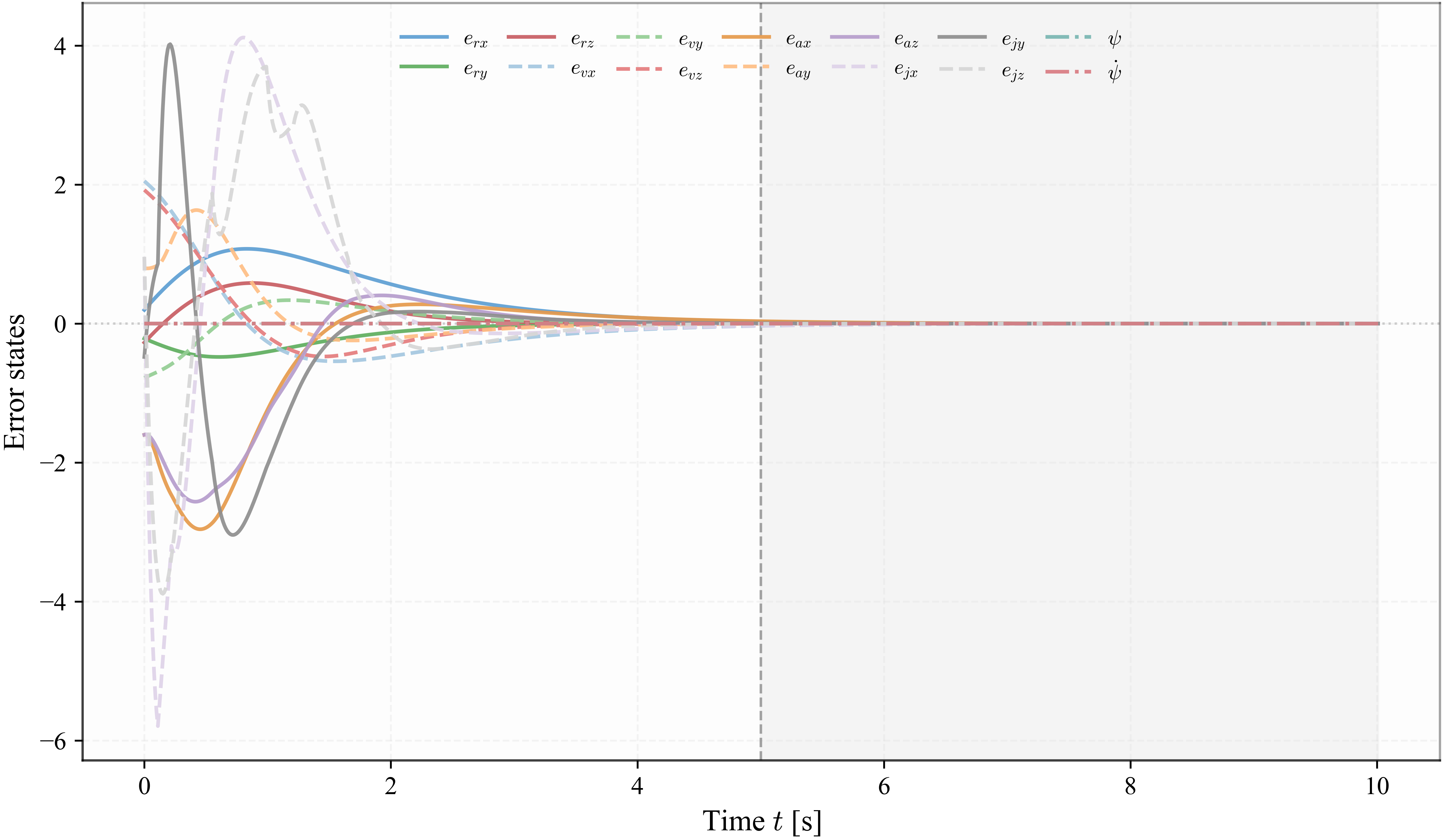}
  \caption{Representative DQN rollout of the external error states. The translational error components at the position, velocity, acceleration, and jerk levels decrease toward small values, while the yaw channel remains bounded, indicating stable closed-loop regulation under the learned gain-scheduling policy.}
  \label{fig:dqn_states}
\end{figure}

We evaluate the proposed framework from two complementary perspectives:
\emph{(i)} whether safety is preserved under different deployment-time
evaluation policies defined over the same admissible gain set, and
\emph{(ii)} whether an offline-trained policy yields feasible closed-loop hover
regulation in a representative rollout.

Our main quantitative result is shown in
Figure~\ref{fig:policy_compare}, which compares four evaluation policies over
the same finite gain table: the deployed greedy DQN policy, two fixed
$\epsilon$-greedy policies with $\epsilon=0.10$ and $\epsilon=0.30$, and a
random-safe policy that samples uniformly from the admissible action set.
Here, the non-greedy policies are included as alternative evaluation policies
over the same certified controller set, rather than as part of the deployment
strategy itself. The violin plots summarize the cumulative reward
distributions across 40 rollouts, and the annotations report the average
number of gain switches. This comparison is particularly informative because it
separates policy-dependent performance from policy-independent safety.

\begin{figure}[!h]
  \centering
  \includegraphics[width=0.85\linewidth]{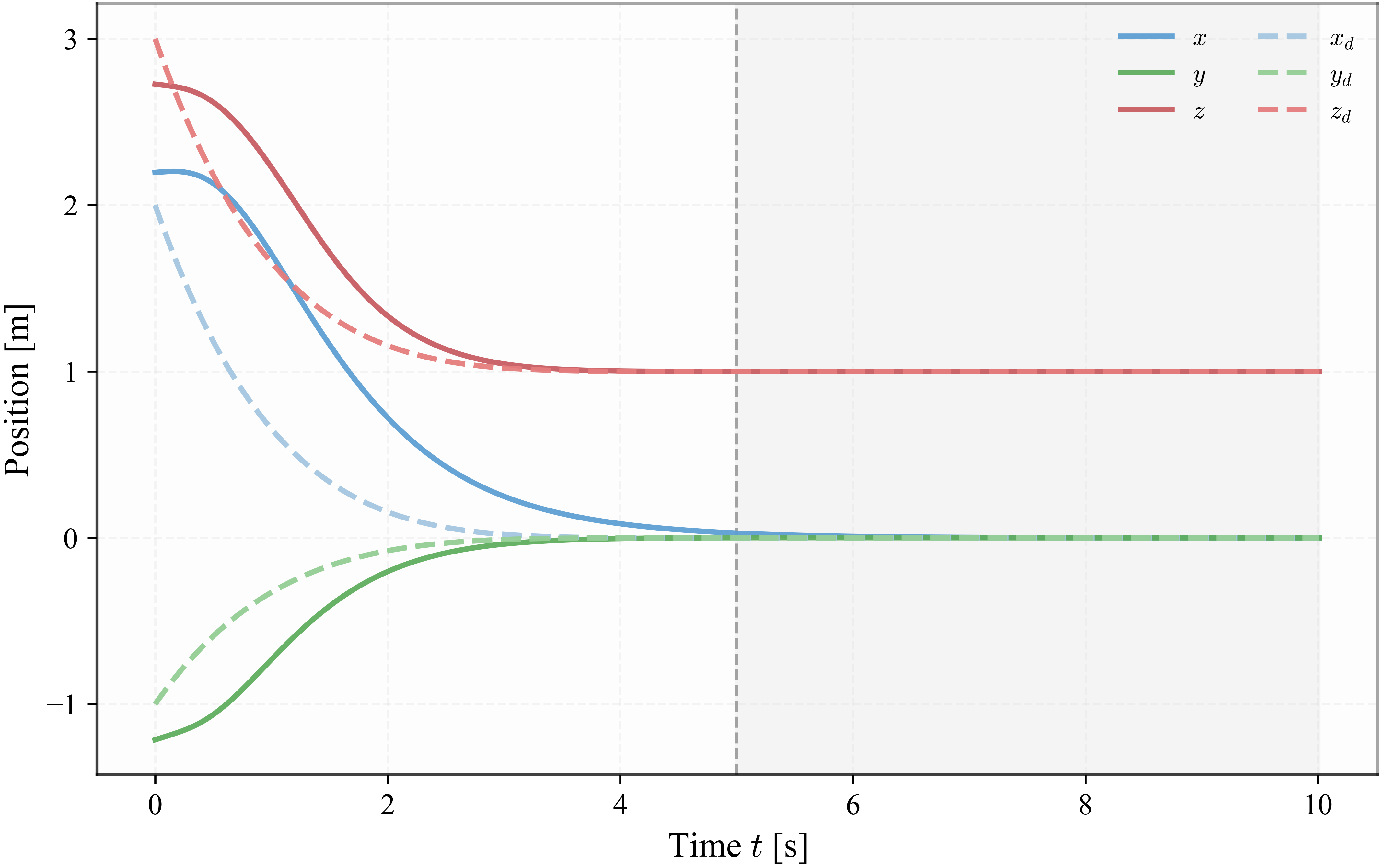}
  \caption{Physical evaluation of inertial position tracking. The quadcopter follows the desired trajectory during the quintic maneuver and settles toward the terminal hover condition after $T_f$, where the reference position is held constant at $\mathbf{r}_d(T_f)$.}
  \label{fig:pos_vs_des}
\end{figure}

\begin{figure}[!h]
  \centering
  \includegraphics[width=0.78\linewidth]{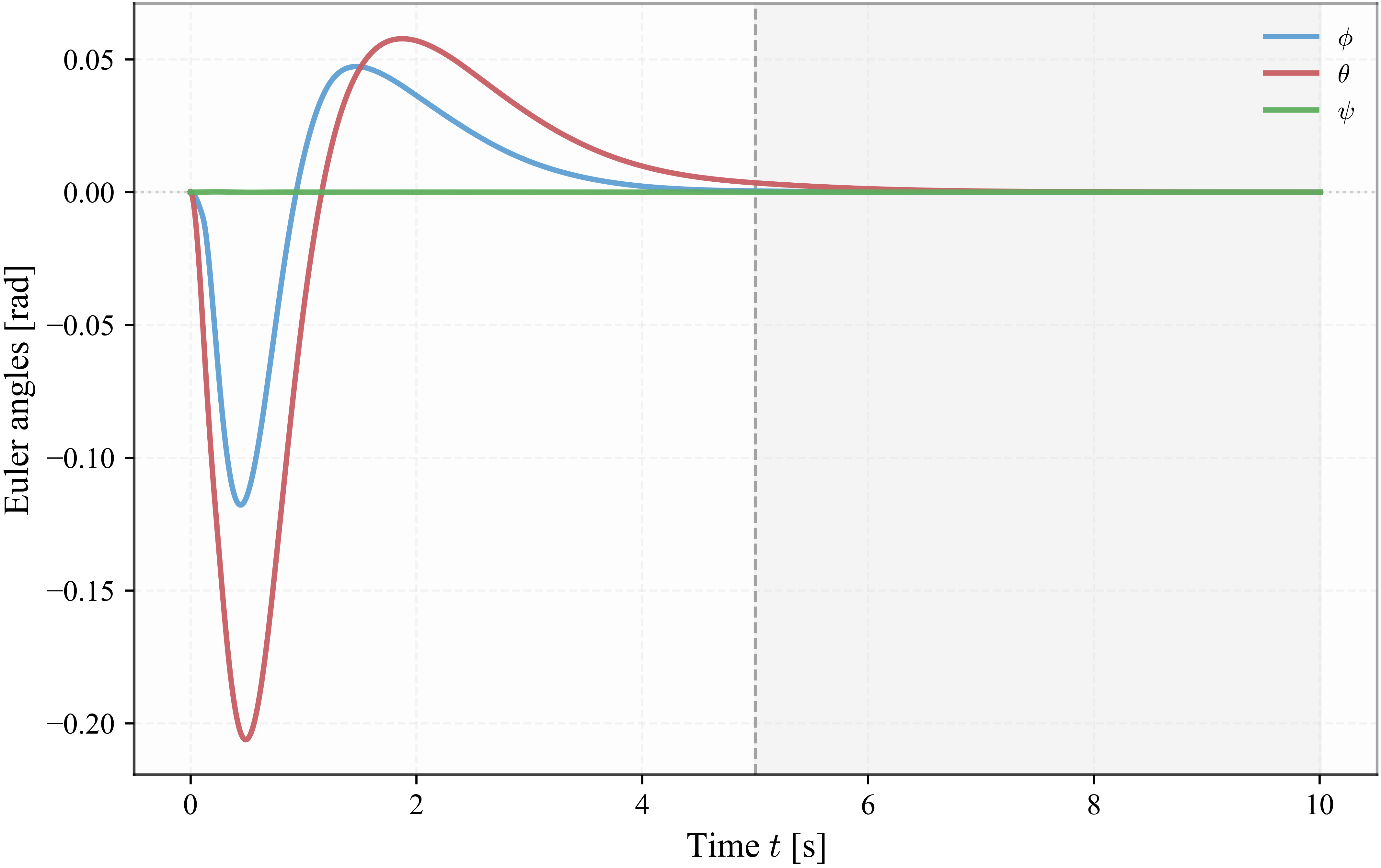}
  \caption{Physical evaluation of the Euler angles $(\phi,\theta,\psi)$. Attitude excursions remain bounded during the maneuver and decay toward small values as the vehicle approaches steady hover.}
  \label{fig:euler}
\end{figure}

\begin{figure}[!h]
  \centering
  \includegraphics[width=0.82\linewidth]{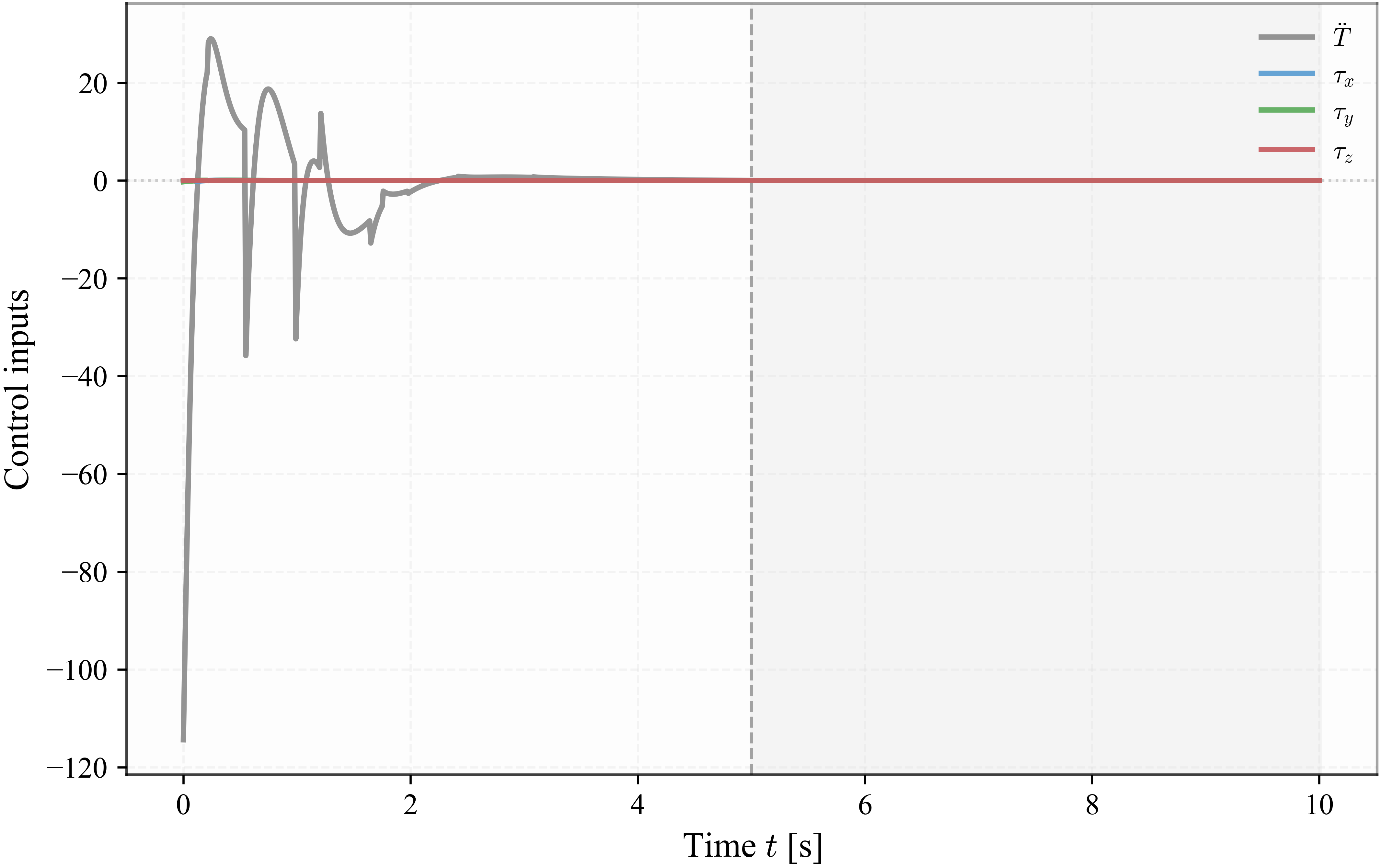}
  \caption{Physical evaluation of the control inputs. The thrust second-derivative command $\ddot T$ and body torques $\boldsymbol{\tau}$ are largest during the initial transient and decrease as the tracking errors are reduced.}
  \label{fig:controls}
\end{figure}

\begin{figure}[!h]
  \centering
  \includegraphics[width=0.82\linewidth]{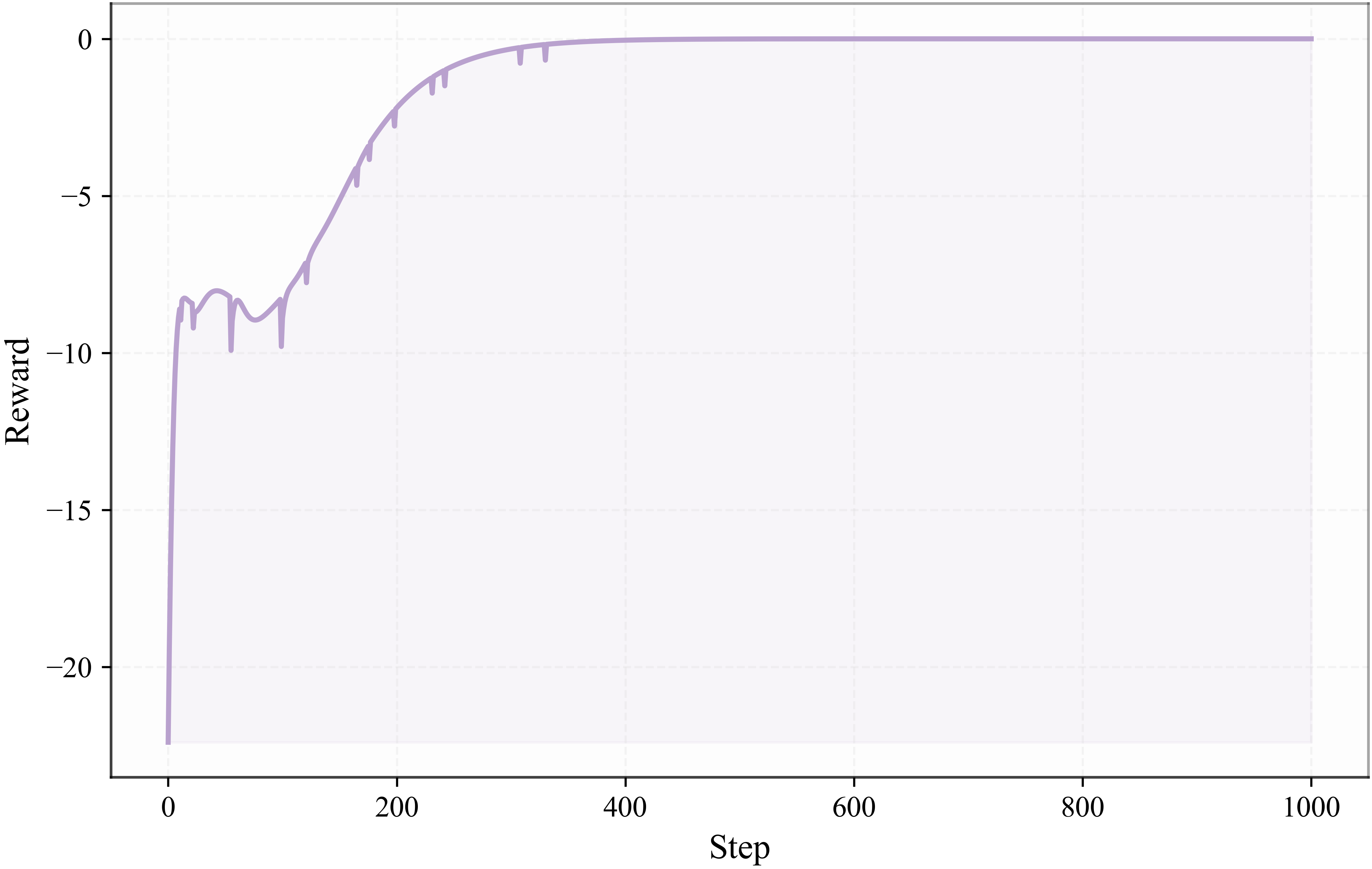}
  \caption{Physical evaluation of the per-step reward. The reward improves over the rollout and approaches zero as the tracking errors and control effort decrease.}
  \label{fig:reward}
\end{figure}

A consistent pattern emerges across all four evaluation policies. No unsafe
termination was observed in any of the 40 rollouts, yielding an empirical
unsafe rate of zero for the greedy, $\epsilon$-greedy, and random-safe
evaluations. Likewise, the violation counts remained zero for all monitored
categories, including attitude, position, velocity, and non-finite state
violations. These results provide strong empirical support for the central
claim of the proposed framework: although the choice of evaluation policy
significantly affects efficiency, all policies remain confined to safe
closed-loop behavior because they operate over the same admissible gain
construction.

Table~\ref{tab:policy_summary} further complements
Figure~\ref{fig:policy_compare} with attitude-safety statistics under the same
40-rollout protocol. Across all policies, the pitch excursions remain tightly
bounded, with mean peak $|\theta|$ values ranging from $0.170$ to $0.174$ rad
and worst-case values ranging from $0.314$ to $0.328$ rad. The greedy policy
achieves the smallest mean peak pitch excursion, while the worst-case pitch
excursion remains similar across all policies. Together,
Figure~\ref{fig:policy_compare} and Table~\ref{tab:policy_summary} reinforce a
clear conclusion: safety is induced primarily by the admissible controller
set, rather than by any particular evaluation policy.

In contrast, the performance metrics vary substantially across policies. The
deployed greedy DQN policy achieves the highest average cumulative reward
($-1872.67$) and the fewest gain switches on average ($26.8$). As the
evaluation policy becomes increasingly exploratory, performance degrades and
switching activity grows sharply: the $\epsilon=0.10$ policy achieves an
average cumulative reward of $-1913.58$ with $206.33$ switches, the
$\epsilon=0.30$ policy yields $-2143.43$ with $514.78$ switches, and the
random-safe policy yields $-2352.84$ with $987.30$ switches. This reveals a
clean separation of roles: policy optimization determines closed-loop
efficiency and switching behavior, whereas safety is inherited from the
admissible gain construction itself.

To illustrate the resulting closed-loop behavior, we next consider a
representative rollout under the deployed greedy DQN policy.
Figure~\ref{fig:dqn_gains} shows the gain-scheduling behavior along the
$x$-, $y$-, and $z$-axes. The policy switches more actively during the initial
transient, then settles into a nearly constant regime after roughly
$3\,\mathrm{s}$ as the quadcopter approaches hover. The selected gains are
also axis-dependent, with the $z$-axis generally requiring larger values than
the $x$- and $y$-axes, consistent with the stronger vertical regulation
demands imposed by gravity and thrust.

The remaining rollout figures provide qualitative evidence that the resulting
closed-loop response is well behaved. In
Figures~\ref{fig:dqn_states}--\ref{fig:euler}, the state, trajectory, and
attitude responses exhibit bounded transients and converge toward the desired
hover condition. The position-related errors decrease after the initial
maneuver, the inertial trajectory tracks the desired reference and settles
near the terminal hover point, and the Euler angles remain bounded
throughout. Figure~\ref{fig:controls} further shows that the thrust and torque
commands are concentrated in the initial transient and decrease as the vehicle
approaches hover, while Figure~\ref{fig:reward} shows that the per-step reward
improves accordingly. Together, these rollout visualizations confirm that the
offline-trained gain schedule yields feasible and stable closed-loop
regulation for the quadcopter case study.

\section{Conclusion}\label{Conclusion}

This paper presented a safe reinforcement learning framework based on
invariance-induced action-space design. By constructing a finite admissible
action set in which each action corresponds to a stabilizing feedback law, the
framework embeds safety directly into the control architecture and preserves
forward invariance of a prescribed safe state set by construction. The
quadcopter hover-control results demonstrated that the proposed formulation
separates two roles that are often intertwined in safe learning: safety is
determined by the admissible controller set, whereas learning improves
closed-loop performance within that set. These results indicate that forward-invariance-certified action design provides
a useful foundation for safe learning in nonlinear systems. Future work will
extend this framework to autonomous driving, where adaptive decision making
must operate over safety-certified steering and braking actions under
lane-keeping, obstacle-avoidance, and vehicle-stability constraints.

\bibliographystyle{IEEEtran}
\bibliography{CDC-Ref}

@inproceedings{Lee2010CDC,
  author    = {Taeyoung Lee and Melvin Leok and N. Harris McClamroch},
  title     = {Geometric Tracking Control of a Quadrotor {UAV} on {SE}(3)},
  booktitle = {49th IEEE Conference on Decision and Control (CDC)},
  year      = {2010},
  pages     = {5420--5425},
  doi       = {10.1109/CDC.2010.5717652}
}

@inproceedings{Mellinger2011ICRA,
  author    = {Daniel Mellinger and Vijay Kumar},
  title     = {Minimum Snap Trajectory Generation and Control for Quadrotors},
  booktitle = {2011 IEEE International Conference on Robotics and Automation (ICRA)},
  year      = {2011},
  pages     = {2520--2525},
  doi       = {10.1109/ICRA.2011.5980409}
}

@inproceedings{Milhim2010AIAA,
  author    = {Alaeddin Milhim and Youmin Zhang and Camille{-}Alain Rabbath},
  title     = {Gain Scheduling Based {PID} Controller for Fault Tolerant Control of Quad-Rotor {UAV}},
  booktitle = {AIAA Infotech@Aerospace Conference},
  year      = {2010},
  address   = {Atlanta, GA, USA},
  doi       = {10.2514/6.2010-3530}
}

@article{Blanchini1999Automatica,
  author    = {Franco Blanchini},
  title     = {Set Invariance in Control},
  journal   = {Automatica},
  volume    = {35},
  number    = {11},
  pages     = {1747--1767},
  year      = {1999},
  doi       = {10.1016/S0005-1098(99)00113-2}
}

@article{Ames2017TAC,
  author    = {Aaron D. Ames and Xiangru Xu and Jessy W. Grizzle and Paulo Tabuada},
  title     = {Control Barrier Function Based Quadratic Programs for Safety-Critical Systems},
  journal   = {IEEE Transactions on Automatic Control},
  volume    = {62},
  number    = {8},
  pages     = {3861--3876},
  year      = {2017},
  doi       = {10.1109/TAC.2016.2638961}
}

@article{Koch2019TCPS,
  author       = {William Koch and Renato Mancuso and Richard West and Azer Bestavros},
  title        = {Reinforcement Learning for {UAV} Attitude Control},
  journal      = {ACM Transactions on Cyber-Physical Systems},
  volume       = {3},
  number       = {2},
  articleno    = {22},
  pages        = {22:1--22:21},
  year         = {2019},
  doi          = {10.1145/3301273}
}

@article{Mnih2015Nature,
  author    = {Volodymyr Mnih and Koray Kavukcuoglu and David Silver and Andrei A. Rusu and Joel Veness and Marc G. Bellemare and Alex Graves and Martin Riedmiller and Andreas K. Fidjeland and Georg Ostrovski and Stig Petersen and Charles Beattie and Amir Sadik and Ioannis Antonoglou and Helen King and Dharshan Kumaran and Daan Wierstra and Shane Legg and Demis Hassabis},
  title     = {Human-Level Control through Deep Reinforcement Learning},
  journal   = {Nature},
  volume    = {518},
  number    = {7540},
  pages     = {529--533},
  year      = {2015},
  doi       = {10.1038/nature14236}
}

@inproceedings{Achiam2017ICML,
  author    = {Joshua Achiam and David Held and Aviv Tamar and Pieter Abbeel},
  title     = {Constrained Policy Optimization},
  booktitle = {Proceedings of the 34th International Conference on Machine Learning (ICML)},
  series    = {Proceedings of Machine Learning Research},
  volume    = {70},
  pages     = {22--31},
  year      = {2017}
}

@inproceedings{Chow2018NeurIPS,
  author    = {Yinlam Chow and Ofir Nachum and Edgar Duenez{-}Guzman and Mohammad Ghavamzadeh},
  title     = {A Lyapunov-Based Approach to Safe Reinforcement Learning},
  booktitle = {Advances in Neural Information Processing Systems 31 (NeurIPS)},
  year      = {2018}
}

@inproceedings{Bouabdallah2004ICRA,
  author    = {Samir Bouabdallah and Pierpaolo Murrieri and Roland Siegwart},
  title     = {Design and Control of an Indoor Micro Quadrotor},
  booktitle = {Proceedings of the 2004 IEEE International Conference on Robotics and Automation (ICRA)},
  year      = {2004},
  volume    = {5},
  pages     = {4393--4398},
  doi       = {10.1109/ROBOT.2004.1302409}
}

@inproceedings{Bouabdallah2004IROS,
  author    = {Samir Bouabdallah and Andr{\'e} Noth and Roland Siegwart},
  title     = {{PID} vs {LQ} Control Techniques Applied to an Indoor Micro Quadrotor},
  booktitle = {2004 IEEE/RSJ International Conference on Intelligent Robots and Systems (IROS)},
  year      = {2004},
  pages     = {2451--2456},
  doi       = {10.1109/IROS.2004.1389776}
}

@article{Castillo2004TCST,
  author    = {Pedro Castillo and Alejandro Dzul and Rogelio Lozano},
  title     = {Real-Time Stabilization and Tracking of a Four-Rotor Mini Rotorcraft},
  journal   = {IEEE Transactions on Control Systems Technology},
  volume    = {12},
  number    = {4},
  pages     = {510--516},
  year      = {2004},
  doi       = {10.1109/TCST.2004.825052}
}

@inproceedings{Madani2006IROS,
  author    = {Tarek Madani and Abdelaziz Benallegue},
  title     = {Backstepping Control for a Quadrotor Helicopter},
  booktitle = {2006 IEEE/RSJ International Conference on Intelligent Robots and Systems (IROS)},
  year      = {2006},
  pages     = {3255--3260},
  doi       = {10.1109/IROS.2006.282433}
}

@inproceedings{Hoffmann2007AIAA,
  author    = {Gabriel M. Hoffmann and Haomiao Huang and Steven L. Waslander and Claire J. Tomlin},
  title     = {Quadrotor Helicopter Flight Dynamics and Control: Theory and Experiment},
  booktitle = {AIAA Guidance, Navigation and Control Conference and Exhibit},
  year      = {2007},
  note      = {AIAA Paper 2007-6461},
  doi       = {10.2514/6.2007-6461}
}

@inproceedings{Bouabdallah2007IROS,
  author    = {Samir Bouabdallah and Roland Siegwart},
  title     = {Full Control of a Quadrotor},
  booktitle = {2007 IEEE/RSJ International Conference on Intelligent Robots and Systems (IROS)},
  year      = {2007},
  pages     = {153--158},
  doi       = {10.1109/IROS.2007.4399042}
}

@article{Nagaty2013JIRS,
  author    = {Amr Nagaty and Sajad Saeedi and Carl Thibault and Mae L. Seto and Howard Li},
  title     = {Control and Navigation Framework for Quadrotor Helicopters},
  journal   = {Journal of Intelligent \& Robotic Systems},
  volume    = {70},
  number    = {1--4},
  pages     = {1--12},
  year      = {2013},
  doi       = {10.1007/s10846-012-9789-z}
}

@article{Faessler2018RAL,
  author    = {Matthias Faessler and Antonio Franchi and Davide Scaramuzza},
  title     = {Differential Flatness of Quadrotor Dynamics Subject to Rotor Drag for Accurate Tracking of High-Speed Trajectories},
  journal   = {IEEE Robotics and Automation Letters},
  volume    = {3},
  number    = {2},
  pages     = {620--626},
  year      = {2018},
  doi       = {10.1109/LRA.2017.2776353}
}

@article{Amoozgar2012IFAC,
  author    = {Mohammad Hadi Amoozgar and Abbas Chamseddine and Youmin Zhang},
  title     = {Fault-Tolerant Fuzzy Gain-Scheduled {PID} for a Quadrotor Helicopter Testbed in the Presence of Actuator Faults},
  journal   = {IFAC Proceedings Volumes},
  volume    = {45},
  number    = {3},
  pages     = {282--287},
  year      = {2012},
  doi       = {10.3182/20120328-3-IT-3014.00048}
}

@article{Prajna2007TAC,
  author    = {Stephen Prajna and Ali Jadbabaie and George J. Pappas},
  title     = {A Framework for Worst-Case and Stochastic Safety Verification Using Barrier Certificates},
  journal   = {IEEE Transactions on Automatic Control},
  volume    = {52},
  number    = {8},
  pages     = {1415--1428},
  year      = {2007},
  doi       = {10.1109/TAC.2007.902736}
}

@inproceedings{Berkenkamp2016CDC,
  author    = {Felix Berkenkamp and Riccardo Moriconi and Angela P. Schoellig and Andreas Krause},
  title     = {Safe Learning of Regions of Attraction for Uncertain, Nonlinear Systems with Gaussian Processes},
  booktitle = {2016 IEEE 55th Conference on Decision and Control (CDC)},
  year      = {2016},
  pages     = {4661--4666},
  doi       = {10.1109/CDC.2016.7798979}
}

@inproceedings{Berkenkamp2017NeurIPS,
  author    = {Felix Berkenkamp and Matteo Turchetta and Angela P. Schoellig and Andreas Krause},
  title     = {Safe Model-Based Reinforcement Learning with Stability Guarantees},
  booktitle = {Advances in Neural Information Processing Systems 30 (NeurIPS)},
  year      = {2017},
  pages     = {909--919}
}

@inproceedings{Cheng2019AAAI,
  author    = {Richard Cheng and G{\'a}bor Orosz and Richard M. Murray and Joel W. Burdick},
  title     = {End-to-End Safe Reinforcement Learning through Barrier Functions for Safety-Critical Continuous Control Tasks},
  booktitle = {Proceedings of the AAAI Conference on Artificial Intelligence},
  volume    = {33},
  number    = {1},
  pages     = {3387--3395},
  year      = {2019},
  doi       = {10.1609/aaai.v33i01.33013387}
}

@book{Bellman1957,
  author    = {Richard E. Bellman},
  title     = {Dynamic Programming},
  publisher = {Princeton University Press},
  address   = {Princeton, NJ},
  year      = {1957}
}

@book{Puterman1994,
  author    = {Martin L. Puterman},
  title     = {Markov Decision Processes: Discrete Stochastic Dynamic Programming},
  publisher = {John Wiley \& Sons},
  address   = {New York, NY},
  year      = {1994}
}

@book{SuttonBarto2018,
  author    = {Richard S. Sutton and Andrew G. Barto},
  title     = {Reinforcement Learning: An Introduction},
  edition   = {2},
  publisher = {MIT Press},
  address   = {Cambridge, MA},
  year      = {2018}
}

\end{document}